\pgfplotsset{compat=1.18}
\theoremstyle{plain}  
\newtheorem{theorem}{Theorem}
\newtheorem{definition}{Definition}
\definecolor{myblue}{RGB}{0,114,178}       
\definecolor{mygreen}{RGB}{0,158,115}      
\definecolor{myorange}{RGB}{230,159,0}     
\definecolor{myred}{RGB}{213,94,0}         
\definecolor{mypurple}{RGB}{86,180,233}    
\definecolor{myyellow}{RGB}{240,228,66}    
\definecolor{mycyan}{RGB}{0,158,115}       
\definecolor{mygray}{RGB}{114,114,114}     
\def\BibTeX{{\rm B\kern-.05em{\sc i\kern-.025em b}\kern-.08em
    T\kern-.1667em\lower.7ex\hbox{E}\kern-.125emX}}
\begin{document}

\title{\emph{FedBit}: Accelerating Privacy-Preserving Federated Learning via Bit-Interleaved Packing and Cross-Layer Co-Design}

\author{\IEEEauthorblockN{Xiangchen Meng and Yangdi Lyu}
\IEEEauthorblockA{\textit{Microelectronics Thrust} \\
\textit{The Hong Kong University of Science and Technology (Guangzhou)}\\
Guangzhou, China \\
xmeng027@connect.hkust-gz.edu.cn, yangdilyu@hkust-gz.edu.cn}
}
\vspace{-27pt}

\maketitle

\begin{abstract}
Federated learning (FL) with fully homomorphic encryption (FHE) effectively safeguards data privacy during model aggregation by encrypting local model updates before transmission, mitigating threats from untrusted servers or eavesdroppers in transmission. However, the computational burden and ciphertext expansion associated with homomorphic encryption can significantly increase resource and communication overhead. To address these challenges, we propose \emph{FedBit}, a hardware/software co-designed framework optimized for the Brakerski-Fan-Vercauteren (BFV) scheme. \emph{FedBit} employs bit-interleaved data packing to embed multiple model parameters into a single ciphertext coefficient, thereby minimizing ciphertext expansion and maximizing computational parallelism. Additionally, we integrate a dedicated FPGA accelerator to handle cryptographic operations and an optimized dataflow to reduce the memory overhead. Experimental results demonstrate that FedBit achieves a speedup of two orders of magnitude in encryption and lowers average communication overhead by 60.7\%, while maintaining high accuracy.

\end{abstract}

\begin{IEEEkeywords}
Federated Learning, Bit-Interleaved Packing, Homomorphic Encryption, Cross-Layer Co-Design
\end{IEEEkeywords}

\section{Introduction}

The remarkable successes of deep neural networks are heavily reliant on large amounts of high-quality data. The significant value and sensitivity of domain-specific data compel organizations to enforce strict protections to preserve confidentiality and maintain a competitive advantage~\cite{truong2021privacy}.
FL enables a collaborative model training across distributed data silos without exchanging raw data, thereby adhering to privacy mandates such as GDPR~\cite{gdpr2016} and HIPAA~\cite{hipaa1996}. However, despite its privacy-preserving design, FL remains susceptible to attacks such as membership inference, property inference, and gradient inversion~\cite{melis2019exploiting,nasr2018comprehensive}, which can leak sensitive client information through unprotected model updates. 
To mitigate these risks, cryptographic techniques, particularly FHE, have gained significant attention for their capability to perform computations directly on encrypted data. Prominent FHE schemes, such as BGV~\cite{zhang2023advancing}, CKKS~\cite{pan2024fedshe}, and BFV~\cite{cai2023secfed}, enable privacy-preserving aggregation in FL by encrypting local updates before transmission, and ensure confidentiality across the training process.

In popular FHE schemes used for FL, such as BFV and CKKS, data is encrypted into polynomials of degree greater than \(2^{12}\), with plaintext values packed into polynomial coefficients under a large modulus \(q\), typically ranging from 128 to 256 bits to meet modern security standards~\cite{gentry2009fully}. These characteristics introduce two primary challenges that limit the scalability of FHE-based FL:

\begin{enumerate}
    \item \textbf{Slow Encryption Speed}. FHE encryption and decryption involve expensive polynomial arithmetic, particularly multiplication, significantly slows encryption and decryption. Despite optimizations such as the number-theoretic transform (NTT), performance remains limited. For instance, CKKS encryption using Microsoft SEAL~\cite{sealcrypto} requires 482.68 seconds for ResNet-18 and 54.69 seconds for AlexNet~\cite{10.1145/3698038.3698557}.
    
    \item \textbf{High Communication Overhead.} FHE ciphertexts are significantly larger than plaintexts, leading to substantial bandwidth consumption during model aggregation. For instance, training on Fashion-MNIST and CIFAR-10 with only four clients required 58.7 GB and 227.8 GB of data transfer, respectively~\cite{zhang2020batchcrypt}. This communication burden severely limits the scalability of FHE-based FL, especially for large models and client populations.
\end{enumerate}

To address these challenges, the research community has actively pursued a variety of solutions for efficiency and scalability in privacy-preserving FL. 

On the software side,  Zhang et al.~\cite{zhang2020batchcrypt} tackled communication inefficiencies through batch encoding and gradient clipping, while the FedSHE framework~\cite{pan2024fedshe} introduces adaptive segmented CKKS encryption to balance security, efficiency, and numerical accuracy. Zuo et al.~\cite{10.1145/3698038.3698557} proposed error correction techniques based on fitting functions and deviation-weight filtering to mitigate encryption-induced errors during aggregation. In parallel, hardware accelerators have been explored to speed up arithmetic in homomorphic encryption.  HI-CKKS~\cite{cryptoeprint:2024/1976} achieves a $200\times$ speedup for NTT operation on GPUs. Similarly, Paul et al.~\cite{9857557} developed an FPGA-based accelerator for BGV encryption and decryption, achieving a $52.71\times$ speedup over conventional software implementations. Despite these advances, many existing system-level FL frameworks still rely on traditional partially homomorphic encryption schemes such as RSA~\cite{10374366} and Paillier~\cite{9893091}. While these schemes offer lightweight encryption suitable for secure aggregation, they lack support for SIMD-style data packing and require one ciphertext per model parameter, resulting in significant communication overhead when extended to large-scale model training

In this paper, we propose \emph{FedBit}, a novel approach to accelerate privacy-preserving FHE-FL from both communication and computation across the system stack. Our contributions include:

\begin{enumerate}
    \item We introduce a bit-interleaved packing scheme that shrinks ciphertexts and reduces computation times.
    \item We implement operator-level accelerators and a carefully scheduled dataflow to fully exploit on-chip resources for homomorphic primitives.

    \item The resulting end-to-end system delivers two orders of magnitude speedup on encryption/decryption and reduces per-client traffic by average \(\mathbf{60.7\%}\).
\end{enumerate}


\section{Preliminary}
\label{sec:preliminary}

\subsection{Brakerski-Fan-Vercautern Fully Homomorphic Encryption}


The BFV scheme~\cite{brakerski2014leveled} is a prominent FHE construction based on the Ring Learning With Errors (RLWE) problem. 
In the BFV framework, a plaintext message \(m\) is first encoded as a polynomial within the plaintext space \(R_t = \mathbb{Z}_t[X]/(X^N + 1)\), where \(t\) is the plaintext modulus and \(N\) is typically a power of two. This polynomial is represented as:
\begin{equation}
    m(X) = \sum_{i=0}^{N-1} m_i X^i \in R_t,
\end{equation}
with each coefficient \(m_i\) being an integer modulo \(t\).

Subsequently, for encryption, the plaintext polynomial \(m(X)\) is transformed into a ciphertext \(ct = (c_0, c_1)\), where each component lies
in the ciphertext space \(R_q = \mathbb{Z}_q[X]/(X^N + 1)\). The ciphertext is constructed as:
\begin{equation}
    \label{eq: enc and dec}
    ct = (c_0, c_1) = \left( a \cdot s + e + \Delta \cdot m(X), -a \right) \in R_q^2,
\end{equation}
where the scaling factor \(\Delta = \lfloor q/t \rfloor\), the secret key \(s\) is uniformly sampled from \(R_2\), \(a\) is uniformly sampled from \(R_q\), and \(e\) is a small error polynomial. For the decryption process, the plaintext \(m(X)\) is recovered from the ciphertext \(ct\) using the secret key \(s\):
\begin{equation}
    m(X) = \left\lfloor \frac{c_0 + c_1 \cdot s}{\Delta} \right\rceil \pmod{t}.
\end{equation}

Given two ciphertexts \( ct = (c_0, c_1) \) encrypting plaintext \( m \) and \( ct' = (c_0', c_1') \) encrypting plaintext \( m' \), their homomorphic sum \( ct_{\text{add}} \) is computed as:
\begin{equation}
    ct_{\text{add}} = (c_0 + c_0' \pmod{q}, c_1 + c_1' \pmod{q}).
\end{equation}
This ciphertext \( ct_{\text{add}} \) decrypts to \( m + m' \pmod{t} \).

\subsection{Federated Learning with Fully Homomorphic Encryption}

Federated learning involves a system that consists of a central server and multiple clients, each holding private datasets. To define the security framework, we introduce the following key entities:

\begin{definition}[Honest-but-Curious Server]
    The server adheres to the federated learning protocol but may attempt to infer additional information from the model updates received during aggregation, such as reconstructing individual clients' data or extracting sensitive patterns from the global model.
\end{definition}


\begin{definition}[
Adversarial Communication Channel]
    The communication channel between clients and the server is vulnerable to passive adversaries who can intercept all transmitted ciphertexts. The system must ensure confidentiality against such eavesdropping attacks.
        
\end{definition}

Under these security assumptions, the operational workflow of FL enhanced by FHE proceeds as illustrated in Figure~\ref{fig 2: Federated Learning with Fully Homomorphic Encryption}. The process ensures privacy preservation as follows:

\begin{figure}[t]
    \centering{\includegraphics[width=0.9\linewidth]{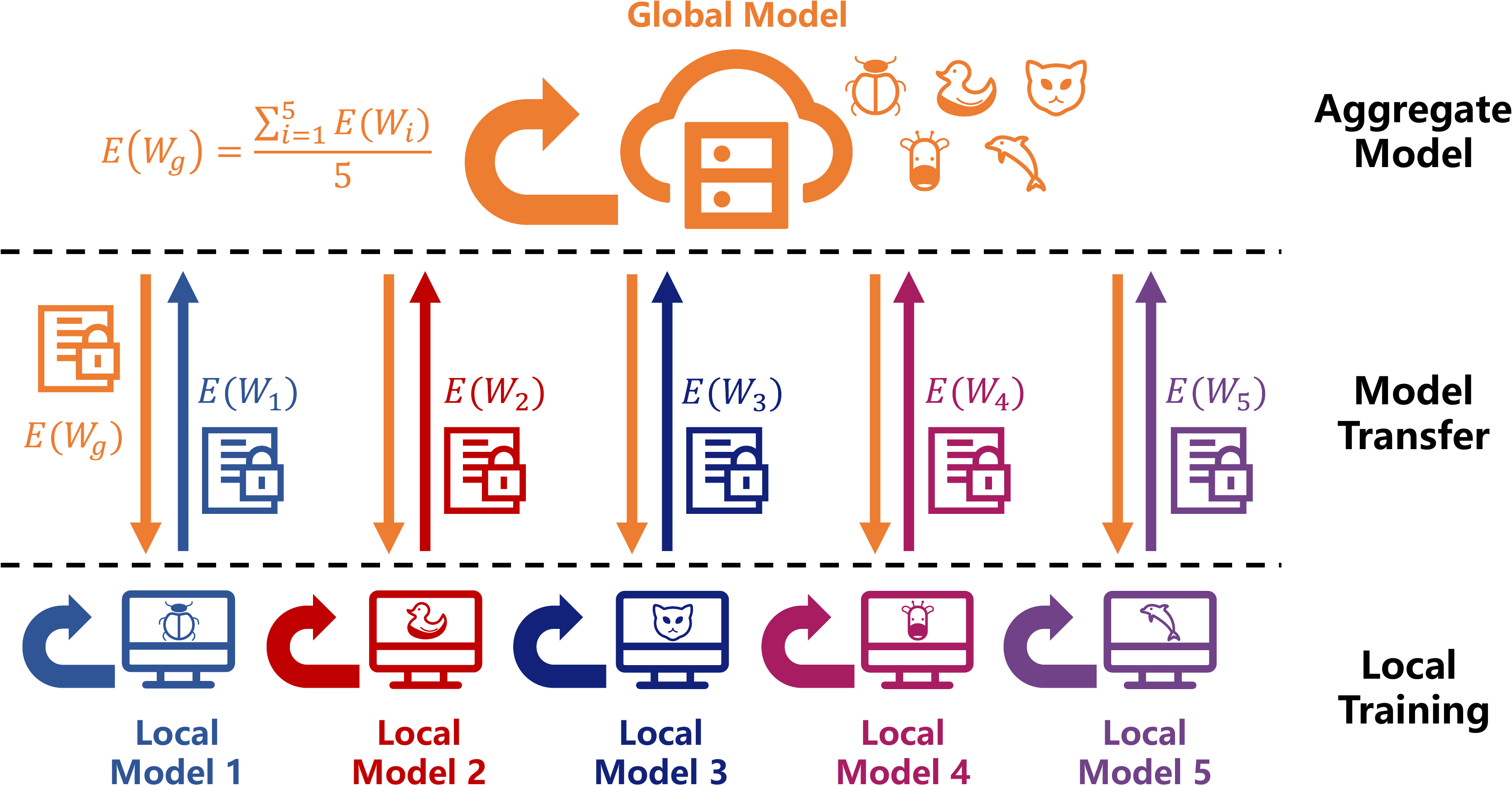}}
    \caption{FL with FHE Workflow}
    \label{fig 2: Federated Learning with Fully Homomorphic Encryption}
\end{figure}

\begin{enumerate}
    \item \textbf{Local Training}: Each client independently trains a local model using its private dataset, computing model updates (weights or gradients) without exposing raw data.
    
    \item \textbf{Encryption and Transmission}: Clients encrypt their model updates using a shared FHE scheme (with the same private key) before transmission, ensuring that the \emph{honest-but-curious server} cannot access plaintext updates.
    
    \item \textbf{Homomorphic Aggregation}: The server performs aggregation operations directly on the encrypted updates from multiple clients.
    
    \item \textbf{Distribution and Decryption}: The encrypted aggregated model is distributed back to clients through \emph{adversarial communication channels}. Each client decrypts the model using its private key to obtain the updated global model.
\end{enumerate}

This iterative cycle continues for multiple rounds until model convergence, with the defined security entities ensuring privacy throughout the entire process.
\section{Bit-Interleaved Packing Method}
\label{subsec:bit_interleaved_packing}

We propose a bit-interleaved packing strategy for BFV encryption that embeds multiple quantized neural network parameters into individual polynomial coefficients while preserving additive homomorphism.

\subsection{Layer-Specific Bit-Field Construction}

Our framework supports \emph{layer-specific quantization}, enabling varying precision levels for different layers of the neural network. For the $\ell^{\text{th}}$ layer, which consists of $r_\ell$ quantized weights, we pack $m_\ell$ quantized weights into each polynomial coefficient. The packing capacity, $m_\ell$, is determined by the constraints of bit-width. The bit-width allocated for each weight is defined by the quantized bits $\beta_\ell$ along with a carry-protection margin $\delta_\ell$. Each weight is assigned a slot of \((\beta_\ell + \delta_\ell)\) bits within a coefficient. Figure~\ref{fig:Bit-Interleaved Packing} demonstrates the bit-interleaved packing scheme with $m_\ell = 2$, utilizing 8-bit representations for quantized weights and a carry margin of $\delta_\ell = 2$ bits. To illustrate this method, consider two quantized weight values, $w_1 = 0$ and $w_2 = 9$. By applying bit-shifting operations, we compute $c = w_1 \cdot 2^{\beta_\ell + \delta_\ell} + w_2 = 0 \cdot 2^{10} + 9$, resulting in a packed polynomial coefficient of $c = 9216$.

\begin{figure}[t]

    \centering
    \includegraphics[width=1\columnwidth]{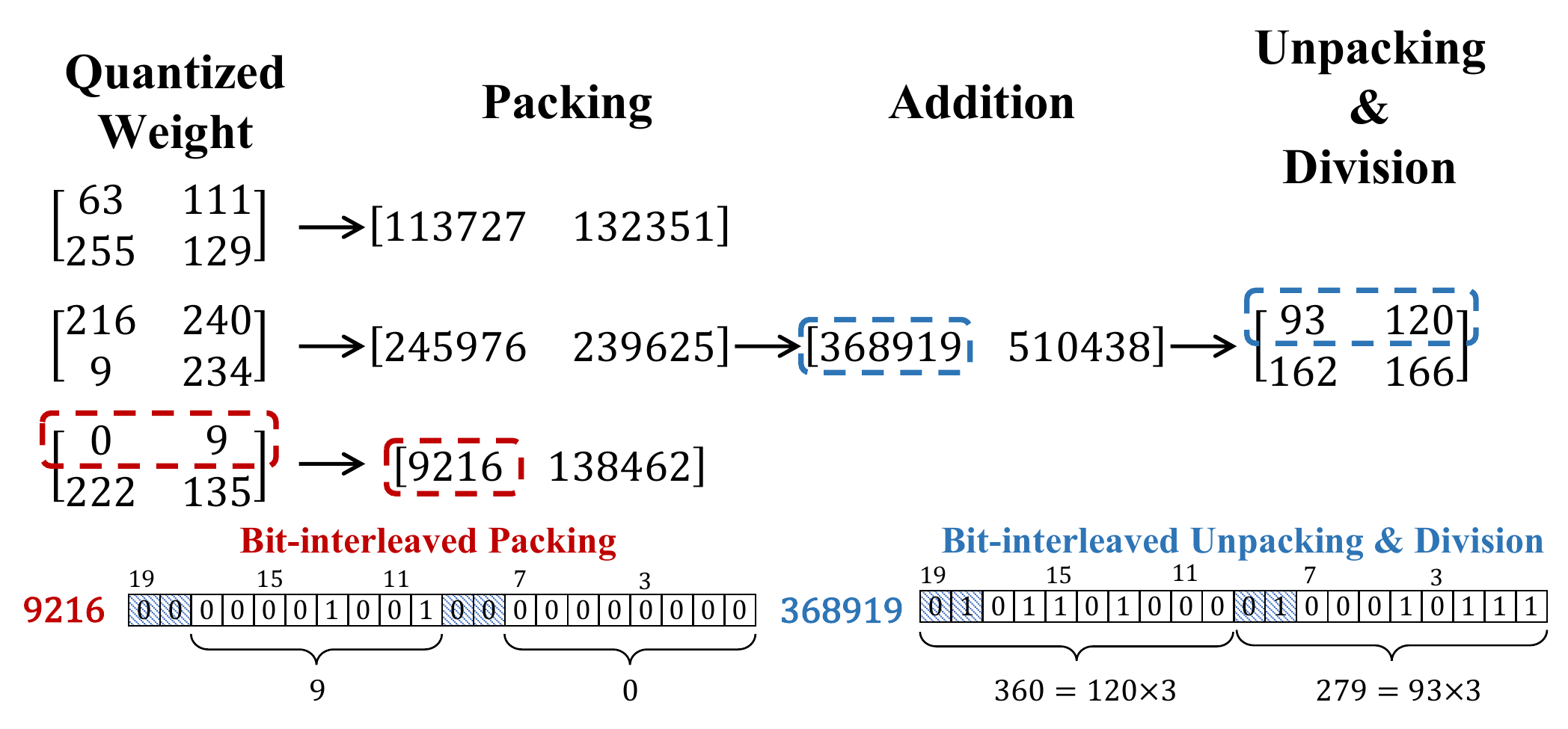}
    \caption{Bit-interleaved packing example with \(\beta_\ell = 8\), and \(\delta_\ell = 2\). In the Unpacking and Division stage, the data are divided by 3 to generate the average result.}
    \label{fig:Bit-Interleaved Packing}
    \vspace{-15pt}
\end{figure}


With $N$ coefficients, a single polynomial can encode $m_\ell \cdot N$ weights. Thus, the total number of plaintext polynomials needed for the $\ell^{\text{th}}$ layer is
\begin{equation}
    T_\ell = \left\lceil \frac{r_\ell}{m_\ell \cdot N} \right\rceil.
\end{equation}
The complete set of layer $\ell$ weights is encoded into $T_\ell$ plaintext polynomials:

\begin{equation}
    \label{eq:layer_polynomial}
    \mathbf{P}^{(\ell)} = \left\{ P_0^{(\ell)}(X), P_1^{(\ell)}(X), \ldots, P_{T_\ell - 1}^{(\ell)}(X) \right\}.
\end{equation}

Each polynomial $P_i^{(\ell)}(X)$ is constructed from $N$ packed coefficients, which represent a total of $m_\ell \cdot N$ quantized weights. To guarantee correct aggregation, we establish a fixed allocation policy that designates weights to coefficients.
\begin{equation}
    \label{eq:layer_polynomial_component}
    \begin{split}
    P_i^{(\ell)}(X) 
    &= \sum_{j=0}^{N-1}
        c_{i, j}^{(\ell)} X^j \\
    &= \sum_{j=0}^{N-1} \left( 
        \sum_{k=0}^{m_\ell - 1} 
        w_{(i N + j) m_\ell + k}^{(\ell)} \cdot 2^{k (\beta_\ell + \delta_\ell)} 
    \right) X^j
    \end{split}
\end{equation}

\subsection{Aggregation of Packed Polynomials}

During federated aggregation, the server computes the sum of $U$ encrypted model updates, each of which is encoded using the bit-interleaved packing method described in Equation~\eqref{eq:layer_polynomial_component}. Let $\mathbf{P}^{(\ell,u)} = \{P_0^{(\ell,u)}(X), \ldots, P_{T_\ell - 1}^{(\ell,u)}(X)\}$ denote the set of packed plaintext polynomials corresponding to the $\ell^{\text{th}}$ layer in the $u^{\text{th}}$ encrypted input. The homomorphic aggregation is performed at the polynomial level as
\begin{equation}
\label{eq:polynomial_aggregation}
\mathbf{P}^{(\ell)}_{\text{agg}} = \sum_{u=1}^{U} \mathbf{P}^{(\ell,u)} := \left\{
\sum_{u=1}^{U} P_0^{(\ell,u)}(X), \ldots,
\sum_{u=1}^{U} P_{T_\ell - 1}^{(\ell,u)}(X)
\right\}.
\end{equation}
After obtaining the aggregated polynomials \(\mathbf{P}^{(\ell)}_{\text{agg}}\), the next step is to compute the average model update by performing a scalar division of each packed coefficient by \(U\).

Figure~\ref{fig:Bit-Interleaved Packing} shows the aggregation phase with $U = 3$ clients. The accumulated coefficient value becomes $c_{\text{agg}} = 368919$. The unpacking procedure extracts the individual aggregated weights as $w_1^{\text{agg}} = 360$ and $w_2^{\text{agg}} = 279$. Subsequently, these values are normalized by the number of participating clients to obtain the averaged weights: $\bar{w}_1 = w_1^{\text{agg}}/U = 120$ and $\bar{w}_2 = w_2^{\text{agg}}/U = 93$.

\subsection{Aggregation Bounds for Correctness}
In order to ensure the correctness of the aggregation, we need to satisfy two key conditions: intra-slot carry prevention and inter-slot modulus constraint.

\begin{theorem}[Intra-Slot Carry Prevention]
\label{thm:intra-poly}
Bit-field isolation is preserved during aggregation of $U$ ciphertexts if:
\begin{equation}
\label{eq:poly_intra_bound}
U \cdot (2^{\beta_\ell} - 1) < 2^{\beta_\ell + \delta_\ell}.
\end{equation}
\end{theorem}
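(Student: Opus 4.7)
The plan is to work slot-by-slot within a single coefficient and show that, under the stated bound, the sum at each slot cannot cross into its left neighbor, so the packing described in Equation~\eqref{eq:layer_polynomial_component} stays structurally intact under polynomial addition.

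First I would fix an arbitrary coefficient position $j \in \{0,\ldots,N-1\}$ and write, from Equation~\eqref{eq:layer_polynomial_component}, the $u^{\text{th}}$ client's packed coefficient as
\begin{equation*}
    c_{i,j}^{(\ell,u)} = \sum_{k=0}^{m_\ell - 1} w_{i,j,k}^{(\ell,u)} \cdot 2^{k(\beta_\ell + \delta_\ell)},
\end{equation*}
where by the quantization assumption $0 \le w_{i,j,k}^{(\ell,u)} \le 2^{\beta_\ell} - 1$. Because homomorphic aggregation is coefficient-wise (Equation~\eqref{eq:polynomial_aggregation}), the aggregated coefficient satisfies
\begin{equation*}
    c_{i,j}^{(\ell,\mathrm{agg})} = \sum_{u=1}^{U} c_{i,j}^{(\ell,u)} = \sum_{k=0}^{m_\ell - 1} S_k \cdot 2^{k(\beta_\ell + \delta_\ell)}, \quad S_k := \sum_{u=1}^{U} w_{i,j,k}^{(\ell,u)}.
\end{equation*}

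Next I would bound each slot sum. Since every summand is at most $2^{\beta_\ell}-1$, we get $S_k \le U\cdot(2^{\beta_\ell}-1)$, and by hypothesis $U\cdot(2^{\beta_\ell}-1) < 2^{\beta_\ell+\delta_\ell}$. Thus each $S_k$ fits strictly within a slot of width $\beta_\ell + \delta_\ell$ bits. I would then argue by induction on $k$ that the partial sum $\sum_{k'=0}^{k} S_{k'}\cdot 2^{k'(\beta_\ell+\delta_\ell)}$ occupies exactly the low $(k+1)(\beta_\ell+\delta_\ell)$ bits of $c_{i,j}^{(\ell,\mathrm{agg})}$, so extracting slot $k$ by shifting right by $k(\beta_\ell+\delta_\ell)$ and masking with $2^{\beta_\ell+\delta_\ell}-1$ recovers exactly $S_k$. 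This is the formal statement of bit-field isolation.

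The only subtlety, which I would handle explicitly, is the interaction with the plaintext modulus $t$ of $R_t$: the argument above treats $c_{i,j}^{(\ell,\mathrm{agg})}$ as an integer, but aggregation happens modulo $t$. I would note that the inter-slot modulus constraint (the companion condition stated after this theorem and treated separately) guarantees $c_{i,j}^{(\ell,\mathrm{agg})} < t$, so the integer bound and the modular value coincide, and the slot-wise decomposition is unambiguous. The main ``obstacle'' is therefore not the arithmetic—which is a one-line bound—but making clear that intra-slot safety is genuinely independent of the modulus condition and only becomes meaningful once that companion constraint is also in force.
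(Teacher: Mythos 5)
Your proposal is correct and follows the same core argument as the paper: bound each slot's aggregated sum by the worst case $U\cdot(2^{\beta_\ell}-1)$ and observe that the hypothesis keeps it strictly below the slot capacity $2^{\beta_\ell+\delta_\ell}$. You go further than the paper's one-line proof by formalizing what ``isolation'' means (the shift-and-mask extraction recovering each $S_k$ exactly) and by explicitly flagging that the integer-level argument is only valid once the companion modulus constraint of Theorem~\ref{thm:inter-poly} rules out wrap-around modulo $t$ --- a dependency the paper's proof leaves implicit.
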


\begin{proof}
Each quantized weight $w_k^{(\ell)} \in [0, 2^{\beta_\ell}-1]$. Under worst-case conditions where all $U$ clients have maximum-valued weights, the aggregated value equals $U \cdot (2^{\beta_\ell} - 1)$. To prevent carry propagation into adjacent bit-fields, this sum must not exceed the allocated bit capacity $2^{\beta_\ell + \delta_\ell}$. The strict inequality ensures carry-free operation.
\end{proof}

\begin{theorem}[Inter-Slot Modulus Constraint]
\label{thm:inter-poly}
Define the maximum packed coefficient value as:
\begin{equation}
\label{eq:max_coefficient}
\mathcal{M}^{(\ell)} = \sum_{k=0}^{m_\ell-1} (2^{\beta_\ell} - 1) \cdot 2^{k \cdot (\beta_\ell + \delta_\ell)} = (2^{\beta_\ell} - 1) \cdot \frac{2^{m_\ell (\beta_\ell + \delta_\ell)} - 1}{2^{\beta_\ell + \delta_\ell} - 1}.
\end{equation}
Correct decryption after aggregation requires:
\begin{equation}
\label{eq:poly_inter_bound}
U \cdot \mathcal{M}^{(\ell)} < t.
\end{equation}
\end{theorem}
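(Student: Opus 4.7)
The plan is to establish two things: that $\mathcal{M}^{(\ell)}$ correctly captures the maximum value a single packed coefficient can take, and that preventing modular wrap-around in $\mathbb{Z}_t$ during aggregation requires the stated bound. I would organize the argument as a short chain: derive the closed form for $\mathcal{M}^{(\ell)}$, lift it to the aggregated polynomial, and then invoke the plaintext-space constraint.

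First, I would verify the expression for $\mathcal{M}^{(\ell)}$. By the packing rule in Equation~\eqref{eq:layer_polynomial_component}, each coefficient $c_{i,j}^{(\ell)}$ is of the form $\sum_{k=0}^{m_\ell-1} w \cdot 2^{k(\beta_\ell+\delta_\ell)}$ with every weight $w \in [0, 2^{\beta_\ell}-1]$. The maximum is attained when every slot holds the value $2^{\beta_\ell}-1$, yielding the leftmost sum in Equation~\eqref{eq:max_coefficient}. Applying the geometric series formula with common ratio $2^{\beta_\ell+\delta_\ell}$ then collapses this sum into the claimed closed form.

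Next, I would bound the aggregated coefficient magnitude. By Theorem~\ref{thm:intra-poly}, pointwise integer addition of $U$ packed polynomials preserves slot isolation, so no carries spill into adjacent bit-fields. Therefore, when viewed as an integer prior to any reduction modulo $t$, the aggregated coefficient is bounded above by $U \cdot \mathcal{M}^{(\ell)}$. Finally, I would argue that correct decryption via Equation~\eqref{eq: enc and dec} demands that each coefficient of $\sum_{u=1}^{U} P_i^{(\ell,u)}$ lie strictly below $t$: BFV's additive homomorphism returns plaintext coefficients reduced mod $t$, and any wrap-around corrupts the packed integer in a way that cannot be undone by bit-shifting to recover the individual slots. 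Combining this necessary condition with the integer upper bound above yields $U \cdot \mathcal{M}^{(\ell)} < t$.

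The main obstacle is cleanly separating the two distinct failure modes---intra-coefficient carries, already handled by Theorem~\ref{thm:intra-poly}, versus full-coefficient reduction modulo $t$---and explaining precisely why avoiding the latter is both necessary and sufficient for recovery. Once that distinction is nailed down, the remaining work is only the geometric-series simplification, which is routine.
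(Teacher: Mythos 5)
Your proposal is correct and follows essentially the same route as the paper: identify $\mathcal{M}^{(\ell)}$ as the worst-case packed coefficient (all slots at $2^{\beta_\ell}-1$), collapse the sum via the geometric series, and require the $U$-fold aggregate to stay below $t$ so that no reduction modulo the plaintext modulus corrupts the packed value. Your additional remarks separating the intra-slot carry failure mode (Theorem~\ref{thm:intra-poly}) from the mod-$t$ wrap-around are a welcome clarification but do not change the argument.
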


\begin{proof}
The maximum coefficient value $\mathcal{M}^{(\ell)}$ occurs when all $m_\ell$ packed weights equal $2^{\beta_\ell} - 1$. The closed-form expression follows from the geometric series summation. After aggregating $U$ such maximum-valued coefficients, the result must remain within the plaintext modulus $t$ to ensure correct modular reduction during decryption.
\end{proof}

Together, Theorems~\ref{thm:intra-poly} and~\ref{thm:inter-poly} define the feasible region for $(\beta_\ell, \delta_\ell, m_\ell, U)$ such that homomorphic addition over packed ciphertexts remains both slot-wise carry-free and globally modulus-safe.

\section{FedBit Hardware Architecture}

To address the computational demands of FHE in FL, the \emph{FedBit} framework integrates a client-side FPGA-based accelerator. This dedicated hardware enhances the performance of critical cryptographic operations, allowing for efficient local computation while preserving strict privacy guarantees.

\subsection{Overall Architecture}
We have developed an FPGA-based accelerator connecting to the host via PCIe to support high-throughput, low-latency bidirectional data transfer, as illustrated in Figure~\ref{fig 5:FPGA Accelerator Architecture}. The accelerator consists of three primary components: a controller, external DDR memory, and a Crypto Engine. The controller manages communication and instruction dispatch, while the DDR stores packed model weights and ciphertexts awaiting encryption or decryption. Data is streamed from DDR to the Crypto Engine's on-chip memory via a direct memory access (DMA) engine to maximize throughput.

\begin{figure}[t]
    \centering
    \includegraphics[width=\columnwidth]{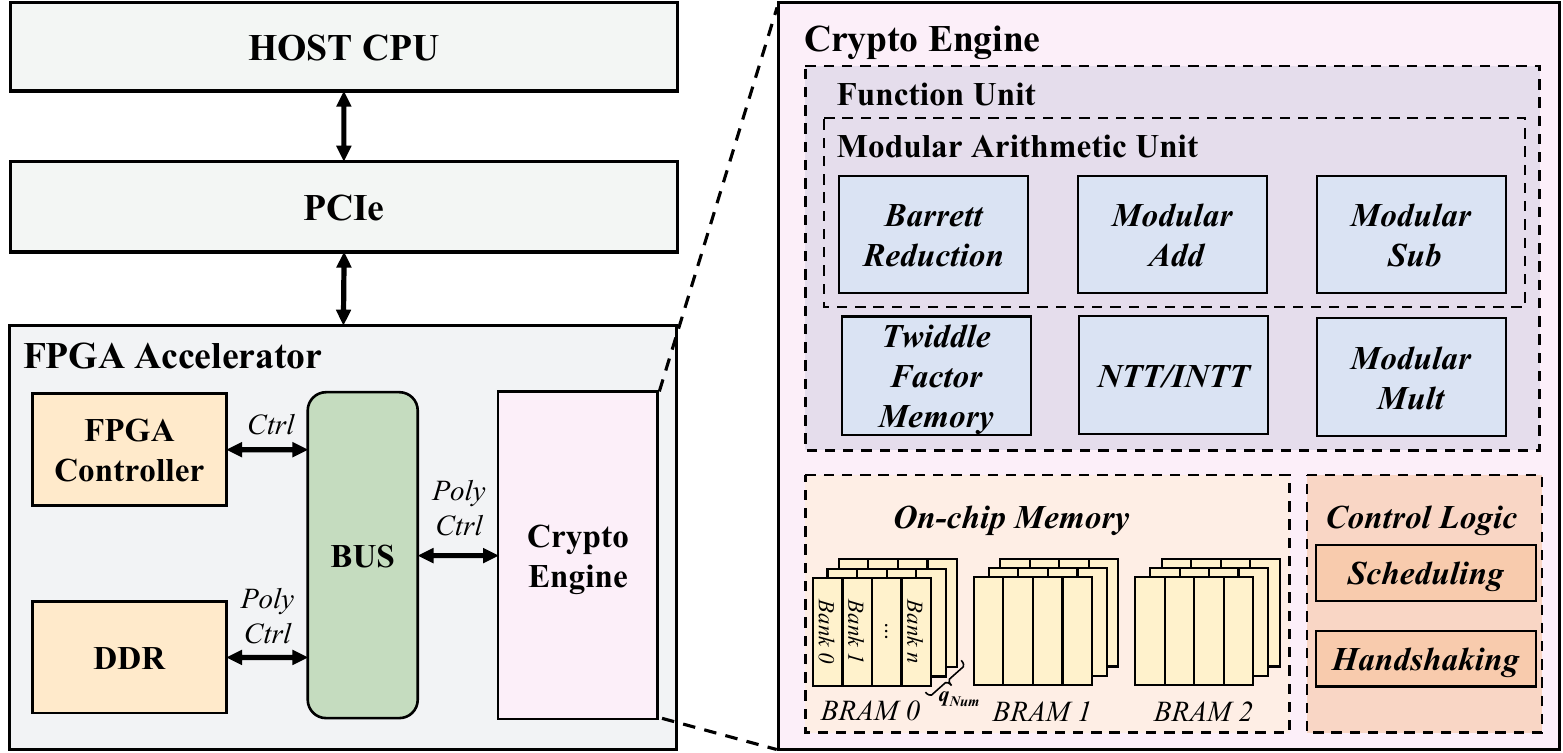}
    \caption{System-Level Accelerator Architecture of \emph{FedBit}}
    \label{fig 5:FPGA Accelerator Architecture}
    \vspace{-15pt}
\end{figure}

The Crypto Engine serves as the computational core, executing all FHE-related polynomial arithmetic. Each ciphertext consists of two degree-\(N\) polynomials. To support large ciphertext moduli, each polynomial coefficient is represented in residue number system (RNS) form and decomposed into \(q_{\text{Num}}\) modular limbs, where each limb corresponds to a distinct modulus in the RNS basis. To efficiently process data,
we adopt a dataflow-oriented micro-architecture inspired by~\cite{meng2024hfntthazardfreedataflowaccelerator}, which enables efficient parallel execution across pipelined stages. Modular multiplication is implemented using a hardware-optimized Barrett reducer, and both NTT and INTT operations are realized through conflict-free, high-throughput pipelines.

Three parallel on-chip BRAM buffers are instantiated to store intermediate polynomials, each organized into \(q_{\text{Num}}\) limbs striped across \(n\) independent banks with \(n\) depth for simultaneous access, where \(n = \sqrt{N}\). All reusable values are cached locally to reduce external memory traffic and latency. 

\subsection{FedBit DataFlow}
\label{subsec:fededge_he_flow}

The accelerator implements encryption and decryption, with the dataflow depicted in Figure~\ref{fig:encryption-dataflow}, which comprises three stages: Preparation, Encryption, and Decryption. 

\begin{enumerate}
\item \textbf{Preparation.} The host generates the secret key \(s\), a random polynomial \(a\), and an error polynomial \(e\), in the coefficient domain. The accelerator then receives these values and precomputes two reusable quantities: \(as + e\) and \(NTT(-a)\).

\item \textbf{Encryption.} The plaintext \(m\) is streamed to the accelerator, scaled by \(\Delta\), and added to the precomputed term \(as + e\) to form the first ciphertext component, while \(NTT(-a)\) forms the second component.

\item \textbf{Decryption.} 
The server aggregates the $a$ in the NTT domain, to minimize the NTT/INTT operations, and the $\Delta m + as + e$ in the coefficient domain from all the clients.
After receiving the aggregated results, the accelerator multiplies \(NTT(-a')\) with the stored secret key \(NTT(s)\), adds the result to the first ciphertext component, and divides by \(\Delta\) to recover the plaintext \(m'\), which is the summation of the messages from the clients.
\end{enumerate}

\begin{figure}[h]
    \centering
    \includegraphics[width=0.95\columnwidth]{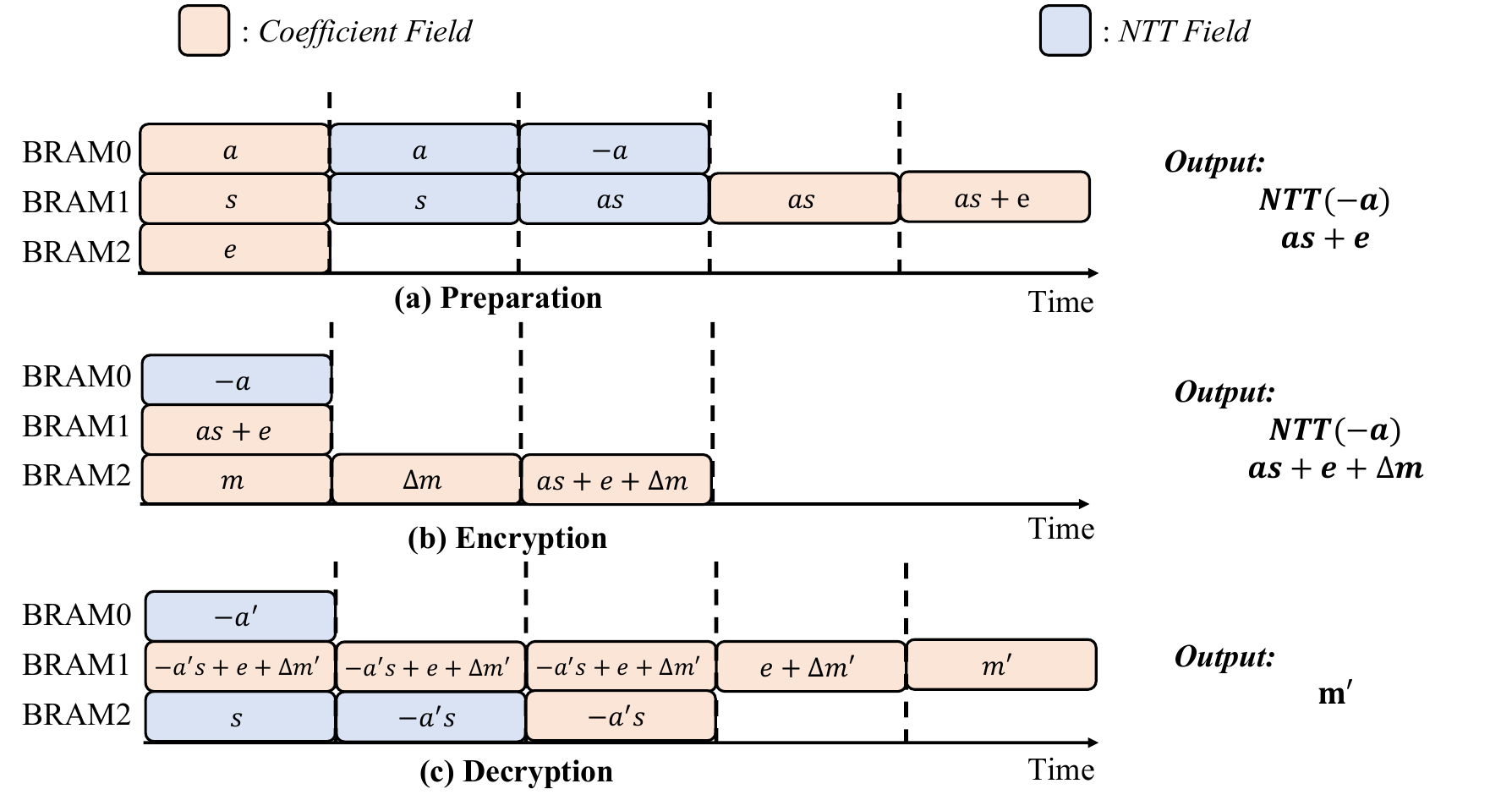}
    \caption{Dataflow of Preparation, Encryption, and Decryption.}
    \label{fig:encryption-dataflow}
    \vspace{-15pt}
\end{figure}

\begin{figure*}[t]
    \centering
    \begin{subfigure}[b]{0.33\textwidth}
        \includegraphics[width=\textwidth]{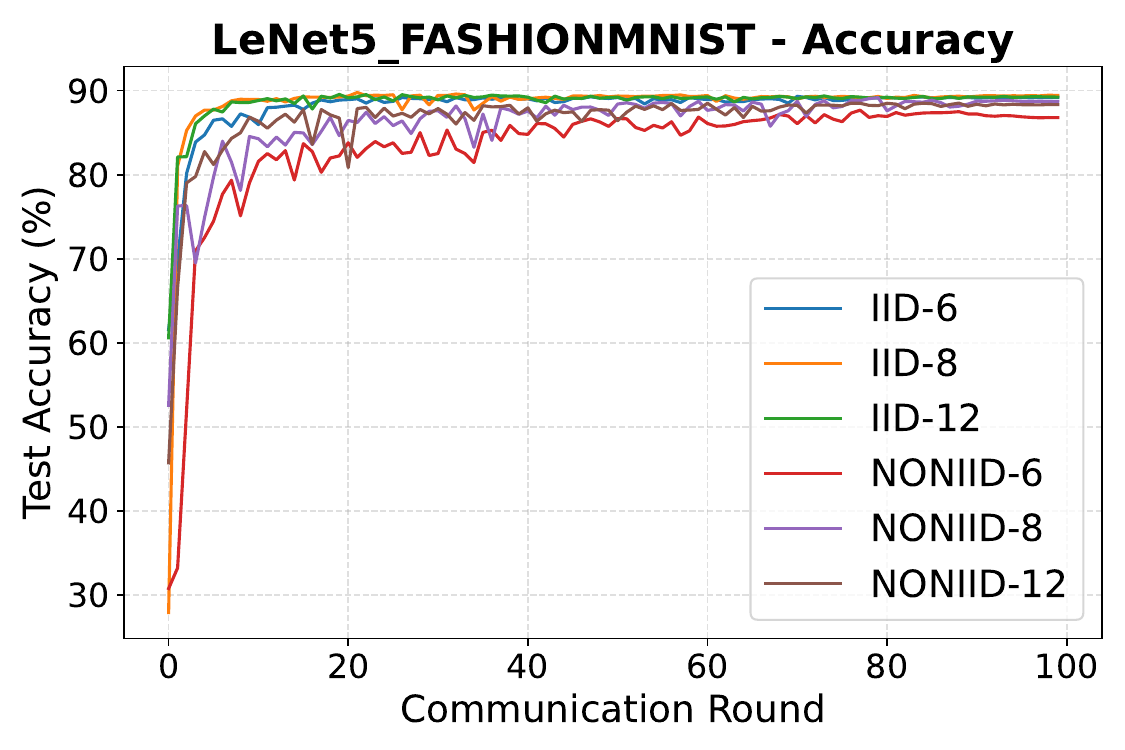}
        \caption{LeNet-5 FashionMNIST}
        \label{fig:lenet5_fashionmnist_accuracy}
    \end{subfigure}
    \hfill
    \begin{subfigure}[b]{0.33\textwidth}
        \includegraphics[width=\textwidth]{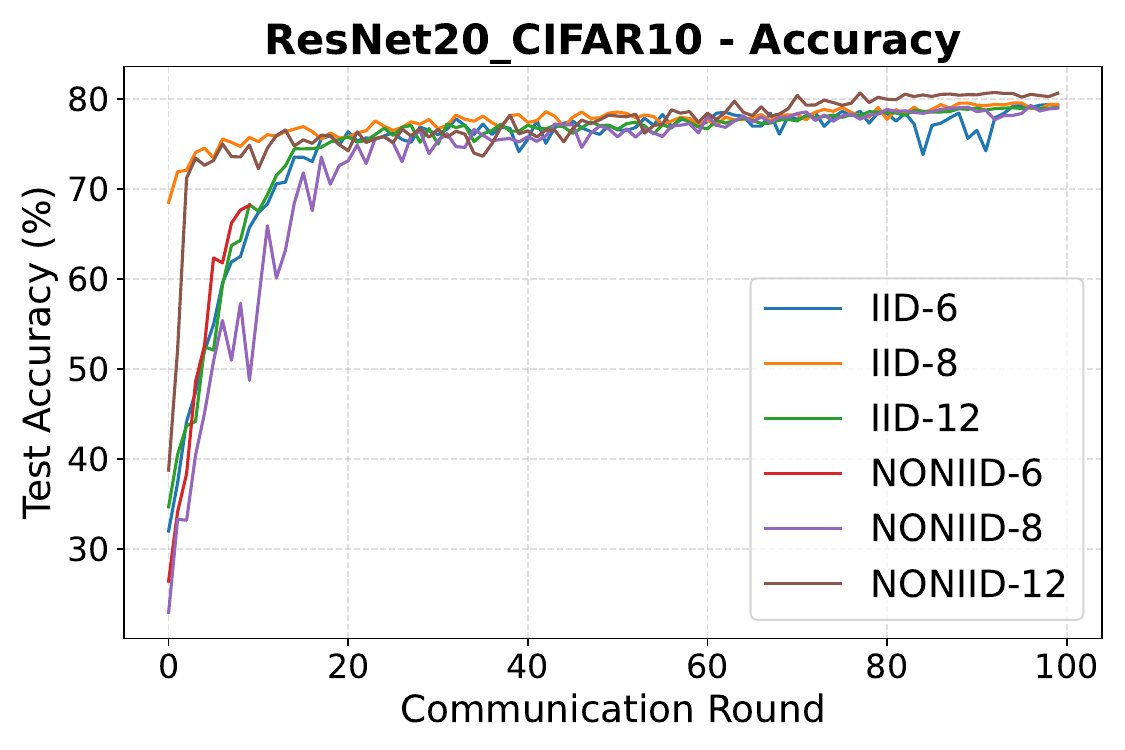}
        \caption{ResNet-20 on CIFAR-10}
        \label{fig:resnet18_cifar10_accuracy}
    \end{subfigure}
    \hfill
    \begin{subfigure}[b]{0.33\textwidth}
        \includegraphics[width=\textwidth]{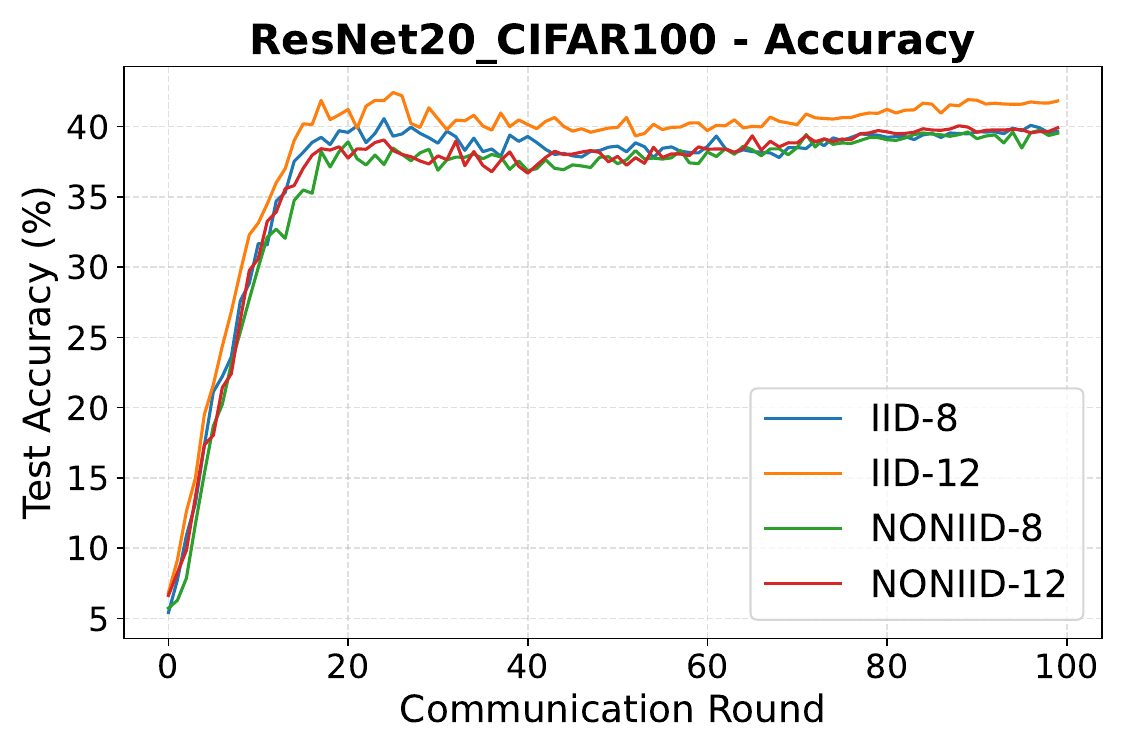}
        \caption{ResNet-20 on CIFAR-100}
        \label{fig:resnet18_cifar100_accuracy}
    \end{subfigure}

    \caption{Accuracy of different models on benchmark datasets over 100 communication rounds.}
    \label{fig:accuracy_comparison}
    \vspace{-15pt}
\end{figure*}
    
\section{\textsc{FedBit} Framework}
\label{sec:fedge-he-framework}

\begin{algorithm}[!hb]
    \caption{\emph{FedBit} Federated Learning Workflow}
    \label{alg:secure_fl}
    {
    \small
    \begin{algorithmic}[1]
    \REQUIRE No. of clients $U$, aggregation clients per round $M$, Rounds $T$
    \ENSURE Global model $\mathbf{W}^{(T)}$
    
    \STATE \textcolor{blue}{\textbf{Phase 1: Initialization}}
    \STATE Broadcast initial model weights $\mathbf{W}^{(0)}$ to all clients
    \STATE All clients agree on a shared secret key $sk$ via a key exchange protocol

    \FOR{round $t = 1$ to $T$}
        \STATE  \textcolor{blue}{\textbf{Phase 2: Local Computation (Clients)}}
        \FOR{each client $i = 1$ to $U$ \textbf{in parallel}}
            \STATE Generate the public key \(pk_i\) according to secret key \(sk\)
            \STATE $\mathbf{W}_i^{(t)} \gets \textsc{LocalTrain}(\mathbf{W}^{(t-1)})$
            \STATE $\mathbf{P}_i^{(t)} \gets \textsc{QuantizeAndPack}(\mathbf{W}_i^{(t)}, \mathbf{W}^{(t-1)})$
            \STATE $\mathbf{E}_i^{(t)} \gets \textsc{EncryptPackedUpdate}(\mathbf{P}_i^{(t)}, pk_i)$ \textcolor{red}{\textbf{[FPGA]}}
            \STATE Send $\mathbf{E}_i^{(t)}$ to server
        \ENDFOR
    
        \STATE \textcolor{blue}{\textbf{Phase 3: Secure Aggregation (Server)}}
        \STATE Wait for $M$ encrypted updates
        \STATE $\mathbf{E}^{(t)} \gets \sum_{i=1}^{M} \mathbf{E}_i^{(t)}$ \COMMENT{Homomorphic sum}
        \STATE Broadcast $\mathbf{E}^{(t)}$ to all clients
    
        \STATE \textcolor{blue}{\textbf{Phase 4: Model Update (Clients)}}
        \FOR{each client $i = 1$ to $U$ \textbf{in parallel}}
            \STATE $\mathbf{P}^{(t)} \gets \textsc{Decrypt}(\mathbf{E}^{(t)}, sk)$ \textcolor{red}{\textbf{[FPGA]}}
            \STATE $\{\mathbf{W}_\ell^{(t)}\}_{\ell=1}^{\mathcal{L}} \gets \textsc{UnpackAndDequantize}(\frac{\mathbf{P}^{(t)}}{M}, \mathbf{W}^{(t-1)})$
            \STATE $\mathbf{W}^{(t)} \gets \textsc{ReconstructModel}(\{\mathbf{W}_\ell^{(t)}\}_{\ell=1}^{\mathcal{L}})$
        \ENDFOR
    \ENDFOR
    \RETURN $\mathbf{W}^{(T)}$
    \end{algorithmic}
    }
\end{algorithm}

The \emph{FedBit} framework enables secure and efficient FL at the edge by integrating FPGA-accelerated homomorphic encryption. It coordinates interactions between $U$ clients and a central server over $T$ communication rounds, ensuring privacy-preserving model updates with minimal computation overhead.

The federated learning protocol proceeds in four phases:

\begin{enumerate}
\item \textbf{Local Training and Encryption:} Each client performs local training and applies layer-wise quantization. For each layer $\ell$, the quantization parameters are computed based on the minimum values $\min(\mathbf{W}_\ell^{(t-1)})$ and maximum values $\max(\mathbf{W}_\ell^{(t-1)})$ of the corresponding weights from the previous round. The quantized weights are then packed and encrypted.

\item \textbf{Secure Aggregation:} The server aggregates encrypted updates from $M \leq U$ clients via homomorphic addition. The aggregated results are broadcast to all the $U$ clients.

\item \textbf{Decryption and Model Update:} Clients decrypt the aggregated ciphertext, unpack and dequantize the weights, and reconstruct the global model.
\end{enumerate}

The complete protocol is formalized in Algorithm~\ref{alg:secure_fl}, where FPGA acceleration is employed during both encryption and decryption to ensure efficient execution.

\section{Experiment}
\label{sec:experiment}

\subsection{Experimental Setup}

\subsubsection*{Hardware and Software Environment}
Experiments were conducted on a CentOS 7 system equipped with an Intel Xeon Gold 6338, 32GB of DDR4 RAM, and an NVIDIA RTX 3080 for local training. The FPGA accelerator is implemented using High-Level Synthesis (HLS) with Vivado HLS 2024.1, specifically targeting the U250 platform. The communication protocol leverages the Flower federated learning framework~\cite{beutel2020flower} for orchestrating the federated training process. 

\subsubsection*{Homomorphic Encryption Configuration}
We implement the BFV scheme with security parameters conforming to contemporary cryptographic standards, with the polynomial degree $N=4096$, plaintext moduli $t=2281701377$, and employ a $128$-bit security level as recommended by the homomorphic encryption standard~\cite{cryptoeprint:2019/939}. 

\subsubsection*{Benchmark Datasets and Models}
We evaluate our framework on three widely-used benchmark datasets: Fashion-MNIST, CIFAR-10, and CIFAR-100, employing LeNet-5 architecture for Fashion-MNIST and ResNet-20 for both CIFAR datasets. All experiments span 100 communication rounds under both independent and identically distributed (IID) and non-IID data distributions to rigorously assess performance under varying degrees of data heterogeneity.

The FL consists of $U=10$ clients, with $M=5$ randomly selected clients participating in each round's secure aggregation process. Each participating client performs one local training epoch per round with a batch size of 64. For optimization, we utilize Adam with an initial learning rate of $10^{-2}$ and implement a cosine annealing scheduler that gradually reduces the learning rate to $10^{-6}$ over the course of training. Following each aggregation step, the server distributes the updated global model to all clients for the subsequent round.

\subsubsection*{Packing Configuration}
For bit-interleaved packing, we utilize a 3-bit for packing, and configure the quantization bit width to 6, 8, and 12 bits for the model weights. 

\subsubsection*{Baselines}
CKKS: A homomorphic encryption scheme supporting approximate arithmetic, implemented with Microsoft SEAL~\cite{sealcrypto}.
BatchCrypt~\cite{zhang2020batchcrypt}: A Paillier-based FL framework evaluated on AWS c5.4xlarge (clients) and r5.4xlarge (aggregator) instances in a geo-distributed setting.
PACK~\cite{10.1145/3698038.3698557}: A communication-efficient FL scheme implemented with TenSEAL, evaluated on a local CPU server (dual Xeon 6226R, 48 GB RAM) and an RTX 3090 GPU server for client-side timing.

\subsection{Accuracy and Communication Overhead}

\begin{table}[b]
    \centering
    \caption{Comparison of traffic and accuracy across different schemes. Traffic is computed per round as the average of 100 rounds for a single client, counting both upload and download. 
    \textsuperscript{*}BatchCrypt results are based on IID data.}
    \setlength{\tabcolsep}{3.5pt} 
    \begin{tabular}{@{}cccccc@{}}
    \toprule
    Dataset & Metric  & CKKS & BatchCrypt\textsuperscript{*} & PACK & \emph{FedBit} \\
    \midrule
    \multirow{2}{*}{Fashion MNIST}
        & Traffic (MB) & 8.68  & 32.27  & 2.78  & 2.7 \\  
        \cmidrule(lr){2-6} 
        & Accuracy     & 83.9\% & 86.2\% & 84.7\% & 88.81\% \\
    \midrule
    \multirow{2}{*}{CIFAR-10}
        & Traffic (MB) & 1526.53 & 3043.84 & 487.17 & 356.7 \\  
        \cmidrule(lr){2-6} 
        & Accuracy     & 65.1\% & 48.6\% & 67.8\% & 80.72\% \\
    \midrule    
    \multirow{2}{*}{CIFAR-100}
        & Traffic (MB) & 1532.67 & 3043.84 & 489.31 & 357.72 \\  
        \cmidrule(lr){2-6} 
        & Accuracy     & 25.7\% & 13\% & 25.8\% & 40.06\% \\
    \bottomrule
    \end{tabular}
 
    \label{tab:communication_accuracy_comparison}
\end{table}

Figure~\ref{fig:accuracy_comparison} illustrates the accuracy across benchmark datasets over 100 rounds, comparing IID and Non-IID data distributions where the numbers indicate the quantization bit width. All models demonstrate a consistent upward trend in accuracy. Generally, non-IID accuracies are lower than IID ones, and higher quantization bit widths lead to better accuracy. For FashionMNIST, the highest accuracy achieved was 89.17\% with 12-bit quantization, while on CIFAR-10, it achieved 80.72\% with 12-bit quantization, and on the CIFAR-100 dataset, it achieved 40.06\% with 12-bit quantization. The results indicate that the \emph{FedBit} framework effectively maintains model performance across different data distributions and quantization levels.

Table~\ref{tab:communication_accuracy_comparison} provides a comprehensive comparison of communication overhead and model accuracy across different schemes. Under Non-IID data distribution with 12-bit quantization, \emph{FedBit} demonstrates superior communication efficiency, reducing average per-client traffic by 54.3\%, 63.9\%, and 63.9\% for Fashion-MNIST, CIFAR-10, and CIFAR-100 datasets compared with three baselines. 

More importantly, \emph{FedBit} consistently achieves substantial accuracy improvements across all benchmarks. On Fashion-MNIST, \emph{FedBit} outperforms CKKS, BatchCrypt, and PACK by 4.91\%, 2.61\%, and 4.11\%, respectively. The performance gap becomes more pronounced on complex datasets: on CIFAR-10, \emph{FedBit} surpasses CKKS by 15.62\%, BatchCrypt by 32.12\%, and PACK by 12.92\%; on CIFAR-100, the gains are 14.36\%, 27.06\%, and 14.26\%, respectively. This is primarily attributed to the 12-bit quantization of model weights, which constrains parameter values to a fixed-point range and thus prevents overfitting. Furthermore, the use of the BFV scheme, which supports exact integer arithmetic, ensures greater numerical stability and avoids the noise amplification inherent in CKKS's approximate arithmetic, which can otherwise lead to degraded precision over successive computations.

\subsection{Hardware Resource and Time Analysis}

\begin{figure}[!t]
\begin{tikzpicture}
    \begin{axis}[
        width=\columnwidth,
        height=5cm,
        ybar=2pt,
        bar width=8pt,
        ylabel={Time (ms)},
        ylabel style={font=\small},
        ymode=log,
        log basis y={10},
        ymin=1,
        ymax=10000000,
        ytick={1,10,100,1000,10000,100000,1000000,10000000},
        yticklabel style={font=\small},
        scaled y ticks=false,
        xtick={1,2,3,4},
        xticklabels={
            {Enc.},
            {Dec.},
            {Enc.},
            {Dec.}
        },
        xticklabel style={font=\small},
        xlabel style={font=\small},
        legend style={
            at={(0.5,1.02)},
            anchor=south,
            legend columns=4,
            font=\small,
            draw=black,
            fill=white,
            column sep=2mm,
            legend cell align=left
        },
        grid=major,
        grid style={line width=0.2pt, draw=gray!30},
        major grid style={line width=0.2pt, draw=gray!30},
        enlarge x limits=0.15,
        axis line style={line width=0.4pt},
        tick style={line width=0.4pt},
        after end axis/.code={
            \node[font=\small, anchor=north] at (axis description cs:0.25,-0.15) {Fashion-MNIST};
            \node[font=\small, anchor=north] at (axis description cs:0.75,-0.15) {CIFAR-10};
        },
        clip=false,  
    ]
    
    \addplot[
        fill=myblue,
        draw=black,
        line width=0.4pt,
    ] coordinates {
        (1, 2862)
        (2, 15)
        (3, 492930)
        (4, 2770)
    };

    \node[font=\tiny, anchor=south] at (axis cs:0.7,2862) {2.9k};
    \node[font=\tiny, anchor=south] at (axis cs:1.7,15) {$15$};
    \node[font=\tiny, anchor=south] at (axis cs:2.7,492930) {492k};
    \node[font=\tiny, anchor=south] at (axis cs:3.7,2770) {2.7k};

    \addplot[
        fill=mygreen,
        draw=black,
        line width=0.4pt,
    ] coordinates {
        (1, 2336)
        (2, 7)
        (3, 414060)
        (4, 1380)
    };

    \node[font=\tiny, anchor=south] at (axis cs:0.9,2336) {2.3k};
    \node[font=\tiny, anchor=south] at (axis cs:1.9,40) {7};
    \node[font=\tiny, anchor=south] at (axis cs:2.9,414060) {414k};
    \node[font=\tiny, anchor=south] at (axis cs:3.9,1200) {1.4k};
    
    \addplot[
        fill=myorange,
        draw=black,
        line width=0.4pt,
    ] coordinates {
        (1, 14177)
        (2, 4873)
        (3, 1642960)
        (4, 788520)
    };

    \node[font=\tiny, anchor=south] at (axis cs:1.1,14177) {14k};
    \node[font=\tiny, anchor=south] at (axis cs:2.1,4873) {4.8k};
    \node[font=\tiny, anchor=south] at (axis cs:3.1,1642960) {1643k};
    \node[font=\tiny, anchor=south] at (axis cs:4.1,788520) {789k};
    
    \addplot[
        fill=myred,
        draw=black,
        line width=0.4pt
    ] coordinates {
        (1, 3.2)
        (2, 3.1)
        (3, 420.4)
        (4, 413.9)
    };
    
    \node[font=\tiny, anchor=south] at (axis cs:1.3,3.2) {3.2};
    \node[font=\tiny, anchor=south] at (axis cs:2.3,3.1) {3.1};
    \node[font=\tiny, anchor=south] at (axis cs:3.3,421) {0.4k};
    \node[font=\tiny, anchor=south] at (axis cs:4.3,414) {0.4k};
    
    \legend{CKKS, PACK, BatchCrypt, \emph{FedBit}}
    \draw[dashed, gray!50, line width=1pt] (axis cs:1.5,1) -- (axis cs:1.5,10000000);
    \draw[dashed, gray!50, line width=1pt] (axis cs:2.5,1) -- (axis cs:2.5,10000000);
    \draw[dashed, gray!50, line width=1pt] (axis cs:3.5,1) -- (axis cs:3.5,10000000);
    
    \end{axis}
\end{tikzpicture}
\caption{Per-round encryption and decryption time for different HE-FL schemes on Fashion-MNIST and CIFAR-10.}
\label{fig:Time}
\vspace{-15pt}
\end{figure}
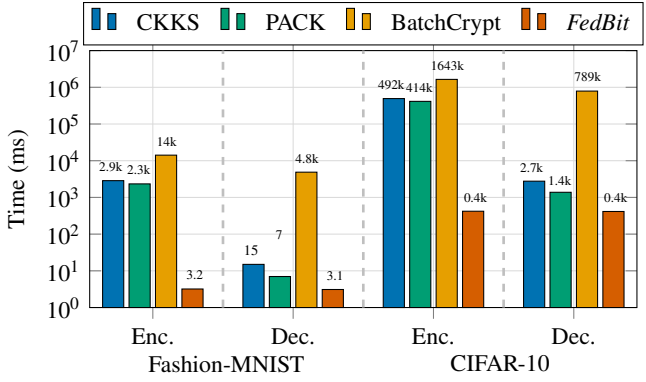

To evaluate the practical deployment feasibility of \emph{FedBit}, we implemented our hardware accelerator on the Xilinx Alveo U250 FPGA using HLS. Our optimized design achieves efficient resource utilization with only 2\% Block RAM, 7\% DSP slices, 36\% Flip-Flops, and 71\% LUTs while operating at 250 MHz.

\subsubsection{Encryption and Decryption Performance}

Figure~\ref{fig:Time} reveals the substantial speedup achieved by \emph{FedBit} compared to state-of-the-art HE-FL implementations. For the FashionMNIST dataset, encryption and decryption times are reduced to $3.2$ ms and $3.1$ ms, respectively, representing speedups of $894\times$ over CKKS and $4,430\times$ over BatchCrypt for encryption. Even for the more complex CIFAR-10 dataset, \emph{FedBit} maintains sub-second performance with 0.42 seconds for both encryption and decryption, achieving over three orders of magnitude improvement compared to BatchCrypt's 1,643 seconds encryption time.

\subsubsection{System-Level Time Breakdown}
Figure~\ref{fig:time_breakdown_pct} demonstrates the crucial role of hardware acceleration in making homomorphic encryption feasible for federated learning.

Without hardware acceleration, encryption operations become the dominant computational burden, particularly for complex models. For CIFAR-10 and CIFAR-100, encryption consumes over half of the total execution time when running on a CPU. This long encryption time makes the system impractical for real-world deployment. Even for the simpler Fashion-MNIST dataset, encryption still requires 7.46\% of the total time, which becomes significant when scaled across thousands of training rounds.

\begin{figure}[!t]
    \centering
    \begin{tikzpicture}
      
        \begin{axis}[
            ybar stacked,
            width=\columnwidth,
            height=6cm,
            symbolic x coords={Fashion-MNIST CPU, Fashion-MNIST Accel., CIFAR-100 CPU, CIFAR-100 Accel., CIFAR-10 CPU, CIFAR-10 Accel.},
            xtick=data,   
            xticklabel style={
                font=\footnotesize,
                rotate=45,
                anchor=north east
            },
            ymin=0,
            ymax=110,
            ylabel={Percentage (\%)},
            ylabel style={font=\small},
            yticklabel style={font=\small},
            ytick={0,20,40,60,80,100},
            legend style={
                at={(0.5,1.05)},
                anchor=south,
                legend columns=4,
                font=\footnotesize,
                draw=none,
                column sep=1mm,
                row sep=1mm,
            },
            bar width=0.6cm,
            enlarge x limits=0.1,
            extra x tick style={
                grid=major,
                grid style={thick, black!50},
                tick style={draw=none}
            },
        ]

        \addplot[fill=myblue] coordinates {
            (Fashion-MNIST CPU, 91.27)
            (Fashion-MNIST Accel., 98.58)
            (CIFAR-100 CPU, 39.83)
            (CIFAR-100 Accel., 84.77)
            (CIFAR-10 CPU, 38.4)
            (CIFAR-10 Accel., 80.3)
        };
        
        \addplot[fill=mygreen] coordinates {
            (Fashion-MNIST CPU, 0.29)
            (Fashion-MNIST Accel., 0.31)
            (CIFAR-100 CPU, 2.26)
            (CIFAR-100 Accel., 4.84)
            (CIFAR-10 CPU, 2.3)
            (CIFAR-10 Accel., 5.1)
        };
        
        \addplot[fill=myorange] coordinates {
            (Fashion-MNIST CPU, 0.05)
            (Fashion-MNIST Accel., 0.05)
            (CIFAR-100 CPU, 0.34)
            (CIFAR-100 Accel., 0.75)
            (CIFAR-10 CPU, 0.4)
            (CIFAR-10 Accel., 0.8)
        };
        
        \addplot[fill=myred] coordinates {
            (Fashion-MNIST CPU, 7.46)
            (Fashion-MNIST Accel., 0.34)
            (CIFAR-100 CPU, 50.7)
            (CIFAR-100 Accel., 4.26)
            (CIFAR-10 CPU, 51.9)
            (CIFAR-10 Accel., 4.5)
        };
        
        \addplot[fill=mypurple] coordinates {
            (Fashion-MNIST CPU, 0.3)
            (Fashion-MNIST Accel., 0.34)
            (CIFAR-100 CPU, 1.77)
            (CIFAR-100 Accel., 0.37)
            (CIFAR-10 CPU, 1.7)
            (CIFAR-10 Accel., 4)
        };
        
        \addplot[fill=myyellow] coordinates {
            (Fashion-MNIST CPU, 0.58)
            (Fashion-MNIST Accel., 0.33)
            (CIFAR-100 CPU, 4.8)
            (CIFAR-100 Accel., 4.2)
            (CIFAR-10 CPU, 4.9)
            (CIFAR-10 Accel., 4.4)
        };
        
        \addplot[fill=mycyan] coordinates {
            (Fashion-MNIST CPU, 0.02)
            (Fashion-MNIST Accel., 0.02)
            (CIFAR-100 CPU, 0.13)
            (CIFAR-100 Accel., 0.26)
            (CIFAR-10 CPU, 0.1)
            (CIFAR-10 Accel., 0.3)
        };
        
        \addplot[fill=mygray] coordinates {
            (Fashion-MNIST CPU, 0.04)
            (Fashion-MNIST Accel., 0.03)
            (CIFAR-100 CPU, 0.23)
            (CIFAR-100 Accel., 0.55)
            (CIFAR-10 CPU, 0.2)
            (CIFAR-10 Accel., 0.6)
        };

        \draw[dashed, gray!60, line width=1pt] (rel axis cs:0.325, 0) -- (rel axis cs:0.325, 1);
        \draw[dashed, gray!60, line width=1pt] (rel axis cs:0.65, 0) -- (rel axis cs:0.65, 1);
        
        \node[font=\scriptsize, anchor=south] at (axis cs:Fashion-MNIST CPU,100) {1.08s};
        \node[font=\scriptsize, anchor=south] at (axis cs:Fashion-MNIST Accel.,100) {0.94s};
        \node[font=\scriptsize, anchor=south] at (axis cs:CIFAR-100 CPU,100) {21.29s};
        \node[font=\scriptsize, anchor=south] at (axis cs:CIFAR-100 Accel.,100) {9.88s};
        \node[font=\scriptsize, anchor=south] at (axis cs:CIFAR-10 CPU,100) {21.31s};
        \node[font=\scriptsize, anchor=south] at (axis cs:CIFAR-10 Accel.,100) {9.26s};

        \legend{Training, Quantization, Packing, Encryption, Aggregation, Decryption, Unpacking, Dequantization}
        \end{axis}
    \end{tikzpicture}
    \caption{FL Time breakdown for one round of one client. 
    }
    \label{fig:time_breakdown_pct}
    \vspace{-15pt}
\end{figure}
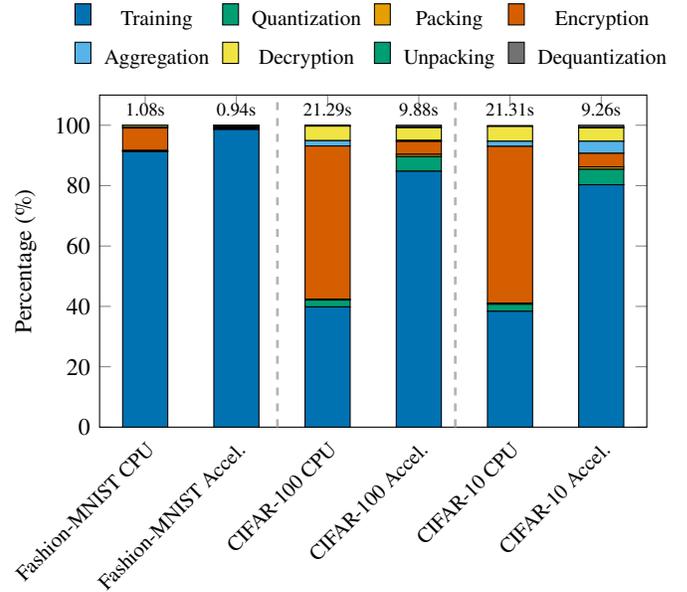

In contrast, \emph{FedBit} with hardware acceleration dramatically alters this scenario. For CIFAR-10, the proportion of time dedicated to encryption plummets from 51.9\% to just 4.5\%. The transformation is equally impressive for CIFAR-100, where encryption overhead drops from 50.7\% to 4.26\%. For Fashion-MNIST, the already modest encryption overhead of 7.46\% is further reduced to a negligible 0.34\%. For CIFAR-10, the total round time decreases from 21.31 seconds to 9.26 seconds, a $2.3\times$ speedup. CIFAR-100 shows similar improvements, dropping from 21.29 to 9.88 seconds. While Fashion-MNIST exhibits a smaller speedup, from 1.08s to 0.94s, due to its inherently lower computational complexity, the relative improvement in encryption efficiency remains substantial.

\section{Conclusion}
\label{sec:conclusion}


The deployment of FL requires rigorous privacy guarantees, yet the computational and bandwidth overhead of homomorphic encryption hinders practical deployment. \emph{FedBit} tackles this barrier by coupling the BFV scheme with bit-interleaved plaintext packing to shrink ciphertext traffic and by offloading encryption and decryption to a client-side accelerator. This hardware-software co-design cuts combined encryption and decryption latency by roughly two orders of magnitude, decreases the traffic overhead by 67\%, and boosts accuracy by about 10\% over baselines, enabling scalable HE-protected FL for large models.

\bibliographystyle{IEEEtran}  
\bibliography{ref}  

\end{document}